\newtheorem{defn}{Definition}
\newtheorem{prop}{Proposition}
\renewcommand{\l}{\left}
\renewcommand{\r}{\right}
\newcommand{\eps}{\epsilon}
\newcommand{\ba}{\begin{array}}
\newcommand{\ea}{\end{array}}
\newcommand{\beq}{\begin{equation}}
\newcommand{\eeq}{\end{equation}}
\newcommand{\beqa}{\begin{eqnarray}}
\newcommand{\eeqa}{\end{eqnarray}}
\newcommand{\sg}{\text{sgn}}
\newcommand{\Diag}{\Delta}
\newcommand{\diag}{\delta}
\newcommand{\round}{\text{round}}
\DeclareMathOperator*{\argmin}{arg\,min}
\DeclareMathOperator*{\argmax}{arg\,max}
\begin{document}
\title{Practical Integer-to-Binary Mapping for Quantum Annealers}
\subtitle{}
\author{Sahar Karimi \and Pooya Ronagh}

\institute{ Sahar Karimi (corresponding author)  \at 
1QB Information Technologies (1QBit)\\
\email{sahar.karimi@gmail.com}
\and 
Pooya Ronagh \at 
1QB Information Technologies (1QBit)\\
\email{pooya.ronagh@1qbit.com}
 }

\date{Received: date / Accepted: date}
\maketitle
\begin{abstract}
Recent advancements in quantum annealing hardware and numerous studies in this area suggests that quantum annealers have the potential to be effective in solving unconstrained binary quadratic programming problems. Naturally, one may desire to expand the application domain of these machines to problems with general discrete variables. In this paper, we explore the possibility of employing quantum annealers to solve unconstrained quadratic programming problems over a bounded integer domain. We present an approach for encoding integer variables into binary ones, thereby representing unconstrained integer quadratic programming problems as unconstrained binary quadratic programming problems. To respect some of the limitations of the currently developed quantum annealers, we propose an integer encoding, named \emph {bounded-coefficient encoding}, in which we limit the size of the coefficients that appear in the encoding. Furthermore, we propose an algorithm for finding the upper bound on the coefficients of the encoding using the precision of the machine and the coefficients of the original integer problem. Finally, we experimentally show that this approach is far more resilient to the noise of the quantum annealers compared to traditional approaches for the encoding of integers in base two. 

\keywords{Adiabatic Quantum Computation, Integer Programming, Integer Encoding, Bounded-Coefficient Encoding}
\end{abstract}

\section{Introduction}\label{sec:intro}
Adiabatic quantum computation (AQC) has been proposed as a successful technique for solving certain classes of optimization problems (see \cite{albash2016,farhi01,McGeoch2013,optimization_AQO}).  In particular, quantum annealers (QA) such as the ones manufactured by D-Wave Systems Inc. \cite{DwaveNature} approximate the adiabatic evolution in the presence of various sources of noise and finite temperature to solve Ising models of the form
\begin{equation} \label{Ising_def} \tag{Ising}
	\min_{s \in \mathbb \{\pm 1\}^n}\, s^t J S + h^t s\,,
\end{equation}
where diagonal entries of $J$ are zeroes, and the nonzero entries of $J$ create a subgraph of a sparse graph called a {\em Chimera} graph \cite{chimera}. Many researchers have explored the applicability of quantum annealing to more-general optimization problems. Solving Ising models without the Chimera graph structure on a QA is discussed in \cite{AidanRoy2014}. Degree reduction techniques, such as the one described in \cite{Ishikawa2011}, enable us to solve unconstrained binary polynomial programming problems of higher order using a QA. Several studies \cite{Venturelli2015,Gili,Dominic2015,Zick} have employed quantum annealing for solving real-world applications, and \cite{karimi_GD} discusses the possibility of using QAs for solving constrained problems. 

Many optimization problems, on the other hand, involve integer-valued variables beyond binary values. Examples of such problems could be the number of vehicles or products travelling along each route of a supply chain (see \cite{supplyChain}), or the number of traded assets in portfolio optimization (see \cite{portfolioOpt}). This paper focuses on solving quadratic problems on bounded integer domains, alternatively referred to as unconstrained integer quadratic programming (UIQP) problems, i.e., problems of the following form:

\begin{equation}
\label{UIQP} 
\tag{UIQP}
\begin{array}{lll}
\min &  x^T Q x + q^t x,\\
\text{s.t.} &x_i \in \{0, 1, 2, \ldots, \kappa^{x_i} \} & \quad   \text{for }  i=\{1,2, \ldots, n\},  \nonumber \\ 
\end{array}
\end{equation}

where $\kappa^{x_i} \in \mathbb Z_+$ is an upper bound on $x_i$; here, $\mathbb Z_+$ denotes the set of non-negative integers. Note that in problems where $x_i \in \{ \alpha, \alpha+1, \alpha+2, \ldots, \alpha+\kappa^{x_i}\}$, we may shift $x_i$ and substitute it with $x_i - \alpha$; hence, without loss of generality, we may assume that the domain of $x_i$ starts at 0. 
It is worth mentioning that, although current QAs are limited to representing Ising models, an alternative approach to encoding bounded integer variables in terms of spin variables is to consider physical implementations of quantum processors as representing Potts-Ising models \cite{AQC_qudits,katzgraber_qudits}. 
Our approach is different and is based on reformulating a UIQP problem as an Ising model. To this end, we represent each integer variable as a linear combination of several binary variables, i.e., 
\begin{equation}
x_i = \sum_{j=1}^{d^{x_i}} c^{x_i}_j y^{x_i}_j = (c^{x_i})^t y^{x_i}, 
\label{intEncoding}
\end{equation}
where $c_j^{x_i} \in \mathbb Z_+$, $y_j^{x_i} \in \{ 0, 1\}$ for $j \in \{1, \ldots, d^{x_i}\}$, and the superscript $x_i$ is used for clarity to denote that $c^{x_i}$ and $y^{x_i}$ correspond to the encoding of variable $x_i$. This representation is referred to as {\emph {integer encoding}}. Some of the well-known integer encodings are binary and unary encodings in which $c_j^{x_i} = 2^{j-1}$ and $c_j^{x_i}=1$, respectively. The \emph{width} of an integer encoding, denoted by $d^{x_i}$ in \eqref{intEncoding}, refers to the number of binary variables required for encoding integer variable $x_i$. The width of binary and unary encodings are $\left \lfloor \log_2(\kappa^{x_i}) \right \rfloor+1$ and $\kappa^{x_i}$, respectively. Since a binary variable $y_j$ could be represented with a spin variable $s_j$ via the affine transformation
\begin{equation}\label{binToSpin}
y_j = \frac{1}{2} \l ( s_j  + 1\r),  
\end{equation}
integer variable $x_i$ could be encoded into several spin variables as
\begin{equation}
\label{int2SpinEncoding}
x_i = \frac{1}{2} \l ( \sum_{j=1}^{d^{x_i}} c_j^{x_i} + \sum_{j=1}^{d^{x_i}} c_j^{x_i} s_j^{x_i} \r), 
\end{equation}
which enables us to successfully represent a UIQP as an Ising model. 

After derivation of an Ising model equivalent to \eqref{UIQP}, heuristic methods including quantum annealing could be employed to find the ground state of the problem. The performance of the heuristic methods highly depends on the energy landscape of the problem. In particular, landscapes with tall barriers are challenging for the hill-climbing heuristics, and having wide barriers could impact the performance of methods like quantum annealing that could potentially benefit from quantum tunnelling (see \cite{katzgraber2015} and references therein for more details). 
Having coefficients that are different in orders of magnitude is undesirable, as they could create energy landscapes with tall and wide barriers. 
Moreover, D-Wave quantum annealers have only a low precision of approximately $10^{-2}$ for couplings' strengths and local fields' biases (i.e., entries of $J$ and $h$ in \eqref{Ising_def}), scaling $J$ and $h$ to a very limited range, (e.g., $[-2,2]$).
Several sources of noise such as thermal excitations and control errors contribute to the low precision of the machines; in \cite{albash2014}, Albash et al. propose a noise model for D-Wave devices that includes the control noise of the local field and couplings of the chip.  Zhu et al. \cite{zoshk2016} show that increasing the classical energy gap beyond the intrinsic noise level of the machine can improve the success of the D-Wave Two QA.
In \cite{zoshk2016}, {\emph{resilience}} is defined as a metric for measuring the resistance of a problem to noise. More precisely, resilience is the probability that the ground state does not change under random field fluctuations found on the chip. In this paper, we argue that restricting the range of the coefficients of a problem could improve resilience; and our results, presented in Section \ref{sec:res} of this paper, support this argument

Studies in \cite{biasedGSsampling,GSstat} suggest that QAs could be biased in finding degenerate ground states; in other words, sampling with QAs does not return different degenerate solutions with uniform probability. Moreover, degenerate ground states are easier to reach; therefore, benchmarks created for QAs tend to avoid degeneracy and have a unique ground state \cite{katzgraber2015}. While there is a need to rigorously study whether highly degenerate low-energy excited states could impact reaching the ground state, avoiding degeneracy could be an alternative. Recall that in integer encoding, an integer variable, $x$, is substituted with $c^t y$, where we assume $c \in \mathbb Z_+^d$ and $y \in \mathbb B^d$. Any binary vector $y$ returns an integer value. If the total number of binary combinations, i.e., $2^d$, is larger than the summation of entries of $c$, i.e., $\sum_{i=1}^d c_i$, some of the integers occur at more than one binary combination. The \emph{code-word} of an integer value $\chi$, for $\chi \in \{1, 2, \ldots, \kappa\}$, refers to the cardinality of the set $\{ y: c^t y = \chi \}$. Code-words more than one is what we refer to as {\emph{redundancy}} in this paper. Under the assumption that the QA is inclined to observe integers with higher code-words, it is ideal in an integer encoding to have a unique code-word for all integers. Binary encoding, for example, keeps the code-word of each integer uniformly at one, but the code-word in unary encoding is highly variant and the redundancy has its peak at $\l \lceil \frac{\kappa}{2} \r \rceil$ with word count $\kappa \choose \lceil \frac{\kappa}{2}\rceil$.

This work presents an integer encoding with minimal width that creates noise-resilient Ising models. 
Since one or several qubits is assigned to each spin variable and the number of qubits on a chip is limited, it is desirable to keep the size of the Ising model, and thus the width of the encoding, as small as possible. Binary encoding has the minimum width and uniform code-word; however, the coefficients $c_j$'s in the binary encoding could get arbitrarily large and result in an error-prone (noisy, non-resilient) Ising model. On the other hand, unary encoding does not expand the range of the coefficients of the problem, but it has an exceedingly large degree and redundancy. 

The notation used in this paper is as follows. Generally, we reserve the upper-case letters, lower-case letters, and Greek alphabet for matrices, vectors, and scalars, respectively, with only a few exceptions; these exceptions, however, should be clear from the context. An all-ones vector and the identity matrix are denoted by $e$ and $I$, respectively. Entries of a vector or  matrix are differentiated with subscript indices. $\lceil \cdot \rceil$ and $\lfloor \cdot \rfloor$ denote the ceiling and floor of a number, respectively. The $\log(\cdot)$ function is in base two; and $\sg(\alpha)$ returns the sign of $\alpha$. The function $\diag(A): \mathbb R^{n\times n} \rightarrow \mathbb R^n$ returns a vector with the diagonal entries of $A$; and the function $\Diag(a): \mathbb R^n \rightarrow \mathbb R^{n\times n}$  returns a diagonal matrix with entries of $a$ on the diagonal.

This paper is organized as follows. In the next section, we present bounded-coefficient encoding under the assumption that an upper bound on the coefficients of the integer encoding is available. In Section \ref{sec:findingUB}, we present how, given the precision, we may find an upper bound on the coefficients of the integer encoding. We test our algorithm by comparing binary and bounded-coefficient encodings in Section \ref{sec:res}. Finally, we conclude our discussion in Section \ref{sec:conclusion}.

\section{Bounded-Coefficient Encoding}\label{sec:bounded-coeff encoding}
Let $x$ be an integer variable with upper bound $\kappa^{x}$, and $\mu^{x}$ be an upper bound on the coefficients of the integer encoding. In other words, we represent $x$ as
\begin{equation}\label{intEncoding_sec2}
x = c^t y,
 \end{equation}
 where $y\in \mathbb B^{d^{x}}$, $c \in \mathbb Z_+^{d^x}$, and $c_i \le \mu^x$ for $i =1, \ldots, d^x$. The encoding we propose in this paper, which we refer to as \emph{bounded-coefficient encoding}, is summarized in the following algorithm:

\begin{algorithm}
\caption{Bounded-Coefficient Encoding}
\label{alg:bounded-coeff}
\begin{tabbing} 
\quad \quad \= \quad \quad  \=\quad \=\quad \=\kill \\
\> Inputs:\\
\> \> $\kappa^x$: upper bound on the integer variable $x$\\
\> \> $\mu^x \ll \kappa^x$: upper bound on the coefficients of the encoding\\
\> {Output:}\\
\> \> $c^x$: integer encoding coefficients\\ 
\\
\> {\bf if} $\kappa^x < 2^{\lfloor \log (\mu^x) \rfloor +1}$ \\
\> \> {\bf return} 
\end{tabbing}
\beq
\label{bounded-coeff_case0} 
c^x= \l [2^0, 2^1, \ldots, 2^{\lfloor \log (\kappa^x) \rfloor-1}, \kappa^x -\sum_{i=1}^{ \lfloor \log (\kappa^x) \rfloor} 2^{i-1}  \r ]
\eeq
\begin{tabbing} 
\quad \quad \= \quad \quad  \=\quad \=\quad \=\kill \\
\> {\bf else}\\
\>\> {\bf compute} $\rho=\lfloor \log \mu^x \rfloor + 1 $, $\nu = \kappa^x - \sum_{i=1}^{\rho } 2^{i-1}$,  and $\eta = \left \lfloor \frac{ \nu} {\mu^x} \right \rfloor$\\
\>\> {\bf return} vector $c^x$ with entries 
\end{tabbing}
\normalfont
\begin{equation}
\label{bounded-coeff_eq}
c^x_i=\begin{cases}
2^{i-1} & \text{ for } i=1, \ldots , \rho\\
\mu^x & \text{ for } i=\rho+1, \ldots , \rho + \eta \\
\nu - \eta \mu^x & \text{ for } i=\rho + \eta +1 \text{ if } \nu - \eta \mu^x \neq 0
\end{cases} 
\end{equation}
\\
\end{algorithm}

Notice that if $\kappa^x < 2^{\lfloor \log \mu^x \rfloor +1}$, then the binary encoding respects the upper bound on the coefficients. These cases are captured in \eqref{bounded-coeff_case0}, and the width of encoding in these instances is $ \lfloor \log (\kappa^x) \rfloor +1$.

When $\kappa^x \ge 2^{\lfloor \log \mu^x \rfloor +1}$, the encoding is derived using \eqref{bounded-coeff_eq} and the width of the bounded-coefficient encoding is
\begin{equation}
d^x=\begin{cases} \rho+\eta+1 & \text{ if } \nu-\eta \mu \neq 0\\ \rho+\eta & \text{ otherwise } \end{cases}. 
\end{equation}

Here are a few demonstrative examples of bounded-coefficient encoding:
\begin{itemize}
\item $\kappa=12$ and $\mu = 8$, the bounded-coefficient encoding is $c=\l [1, \ 2, \ 4,\ 5\r ]^t;$ 
\item $\kappa=20$ and $\mu = 6$, the bounded-coefficient encoding is $c=\l [ 1,\ 2,\ 4,\ 6,\ 6,\ 1\r  ]^t.$
\end{itemize}

We assume $\kappa^x \gg 2^{\lfloor \log \mu^x \rfloor +1}$ when we refer to the bounded-coefficient encoding, and our general statements are focused on these cases where we have multiple coefficients of size $\mu^x$. Propositions \ref{validity} and \ref{minDegree}, also, refer to these cases described by \eqref{bounded-coeff_eq}; these propositions hold true for binary encoding; hence, they hold for the cases that are derived with \eqref{bounded-coeff_case0}, such as our first example above. 

\begin{defn} 
\label{kComplete}
An encoding $c$ is called $\kappa$-complete if it can encode only and all integers $\{0,1,\ldots, \kappa\}$. 
\end{defn}
\begin{prop}\label{validity}
The bounded-coefficient encoding, generated by Algorithm \ref{alg:bounded-coeff}, is $\kappa^x$-complete. 
\end{prop}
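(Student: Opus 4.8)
The plan is to establish the two halves of $\kappa^x$-completeness separately: that Algorithm~\ref{alg:bounded-coeff} encodes \emph{only} integers in $\{0,1,\ldots,\kappa^x\}$, and that it encodes \emph{all} of them. The first half is immediate: every coefficient produced in \eqref{bounded-coeff_eq} is a non-negative integer, so as $y$ ranges over $\mathbb B^{d^x}$ the value $c^t y$ stays in $\{0,1,\ldots,\sum_i c^x_i\}$, and a telescoping computation gives
\[
\sum_i c^x_i = (2^\rho-1) + \eta\mu^x + (\nu-\eta\mu^x) = (2^\rho-1) + \nu = \kappa^x ,
\]
where the last step uses $\nu = \kappa^x - \sum_{i=1}^\rho 2^{i-1} = \kappa^x-(2^\rho-1)$; the same identity holds when the final coefficient is absent, since then $\nu = \eta\mu^x$. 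Hence no integer exceeding $\kappa^x$ (or below $0$) can be encoded, and it remains to show every $m \in \{0,1,\ldots,\kappa^x\}$ is encodable.

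For that I would isolate an elementary monotone-extension lemma: if non-negative integers $c_1,\ldots,c_k$ encode exactly $\{0,1,\ldots,N\}$ (so in particular $N = \sum_{j\le k} c_j$), and $c_{k+1}$ is a non-negative integer with $c_{k+1}\le N+1$, then $c_1,\ldots,c_{k+1}$ encode exactly $\{0,1,\ldots,N+c_{k+1}\}$. The proof is a one-line case split: an $m\le N$ is encoded by the old coefficients with $y_{k+1}=0$; an $m$ with $N<m\le N+c_{k+1}$ has $0 \le m-c_{k+1}\le N$ — the left inequality is exactly $c_{k+1}\le N+1$ combined with $m\ge N+1$ — so $m-c_{k+1}$ is encoded by $c_1,\ldots,c_k$ and setting $y_{k+1}=1$ encodes $m$.

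Then I would apply the lemma along the coefficient list of \eqref{bounded-coeff_eq}. The prefix $c^x_1,\ldots,c^x_\rho = 2^0,\ldots,2^{\rho-1}$ encodes exactly $\{0,1,\ldots,2^\rho-1\}$ by the usual binary expansion. Appending the first copy of $\mu^x$ needs $\mu^x \le (2^\rho-1)+1 = 2^\rho$, which is precisely what the choice $\rho=\lfloor\log\mu^x\rfloor+1$ guarantees, as it forces $\mu^x<2^\rho$; each further copy of $\mu^x$ satisfies the hypothesis trivially since the running bound $N$ has only increased, so after the $\eta$ copies the encodable set is $\{0,1,\ldots,2^\rho-1+\eta\mu^x\}$. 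If $\nu-\eta\mu^x\neq 0$, note it equals $\nu \bmod \mu^x$, hence $\nu-\eta\mu^x < \mu^x \le (2^\rho-1+\eta\mu^x)+1$; one final application of the lemma makes the encodable set $\{0,1,\ldots,2^\rho-1+\eta\mu^x+(\nu-\eta\mu^x)\} = \{0,1,\ldots,2^\rho-1+\nu\} = \{0,1,\ldots,\kappa^x\}$, and the same set is reached when the last coefficient is absent. Since we are in the branch $\kappa^x \ge 2^{\lfloor\log\mu^x\rfloor+1}=2^\rho$, we also have $\nu = \kappa^x-(2^\rho-1)\ge 1$, so $\eta$, the last coefficient, and every intermediate set above are well defined.

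The genuinely routine parts are the telescoping sum and the case split in the lemma; the only place calling for attention is matching the hypothesis $c_{k+1}\le N+1$ at the first append — which is exactly where $\rho=\lfloor\log\mu^x\rfloor+1$ enters — and observing that $\nu-\eta\mu^x = \nu\bmod\mu^x < \mu^x$ so the final append also satisfies the hypothesis. I expect this bookkeeping, rather than any real difficulty, to be the crux of the argument.
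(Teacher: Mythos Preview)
Your proof is correct and follows essentially the same approach as the paper: both show the binary prefix covers an initial interval and then extend that interval one $\mu^x$ at a time, finishing with the residual coefficient. Your version is simply more carefully written---you make the ``only'' direction explicit via $\sum_i c^x_i=\kappa^x$ and isolate the extension step as a lemma with the precise hypothesis $c_{k+1}\le N+1$, whereas the paper leaves these as informal observations.
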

By the first part of our encoding, i.e., $1, \ldots, 2^{\rho-1}$, we can encode all integers $\{ 0, 1, \ldots, \mu^x\}$. By adding $\mu^x$ to those binary combinations, we can generate all integers from $\mu^x$ to $2\mu^x$; if we add two $\mu^x$ factors, we get integers $2\mu^x$ to $3\mu^x$, and so on. Finally, we are guaranteed that we get integers $\eta\mu^x$ to $\kappa^x$ because $\kappa^x- \eta\mu^x < \sum_{i=1}^{\rho} 2^{i-1} - \mu^x$ by the fact that $\nu - \eta \mu^x < \mu^x$.\\ 

\begin{defn}
A sub-encoding $\tilde c$ of an encoding $c$ is a choice of entries of $c$, with a fixed ordering arranged in a column vector $\tilde c$. 
\end{defn}
\begin{prop}\label{minDegree}
Bounded-coefficient encoding, generated by Algorithm \ref{alg:bounded-coeff}, has minimum width among all $\kappa^x$-complete integer encodings of the form \eqref{intEncoding} that respect an upper bound on their coefficients.
\end{prop}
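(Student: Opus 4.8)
The goal is to show that any $\kappa^x$-complete encoding $c$ with all coefficients bounded by $\mu^x$ must use at least $d^x$ binary variables, where $d^x$ is the width produced by Algorithm \ref{alg:bounded-coeff}. The natural strategy is a counting/covering argument combined with a greedy-exchange argument. Let me sketch both halves.

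First, the easy direction: a lower bound on the width from the "reach" of the encoding. If $c \in \mathbb Z_+^d$ with every $c_i \le \mu^x$, then the largest integer representable is $\sum_{i=1}^d c_i \le d\,\mu^x$, so $\kappa$-completeness forces $d \ge \kappa^x/\mu^x$. This is close to but not exactly $d^x$, because the first $\rho$ coefficients of our encoding are the small powers of two $2^0,\dots,2^{\rho-1}$ summing to only $2^\rho - 1$ rather than $\rho\mu^x$. So the crude bound $d \ge \lceil \kappa^x/\mu^x\rceil$ does not immediately match $\rho + \eta + 1$. I would therefore refine this by splitting the coefficients of an arbitrary competing encoding $c$ into a "small part" and a "large part" and arguing that the small part must, on its own, be rich enough to represent a full initial segment of integers.

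The key structural lemma I would aim for: in any $\kappa^x$-complete encoding respecting the bound $\mu^x$, if we sort the coefficients in nondecreasing order $c_{(1)} \le c_{(2)} \le \cdots \le c_{(d)}$, then for every prefix we need $\sum_{i=1}^k c_{(i)} \ge c_{(k+1)} - 1$ (otherwise the integer $\sum_{i=1}^k c_{(i)} + 1$ lies strictly between what the first $k$ coefficients can reach and the smallest value using $c_{(k+1)}$, so it is unrepresentable — this is the classical "complete sequence" obstruction, the same reasoning used informally in the proof of Proposition \ref{validity}). From this one shows inductively that $\sum_{i=1}^k c_{(i)} \ge 2^k - 1$ as long as the $c_{(i)}$ stay below $\mu^x$, i.e. the cheapest way to cover an initial segment of integers is exactly by powers of two, which is what bounded-coefficient encoding does in its first $\rho$ slots. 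Once a coefficient hits the cap $\mu^x$, every subsequent coefficient contributes at most $\mu^x$, so to climb from the value $2^\rho - 1 \approx \mu^x$ up to $\kappa^x$ one needs at least $\lceil (\kappa^x - (2^\rho-1))/\mu^x \rceil = \lceil \nu/\mu^x \rceil$ further coefficients, and $\lceil \nu/\mu^x\rceil$ equals $\eta$ or $\eta+1$ according to whether $\nu - \eta\mu^x = 0$ — matching $d^x$ exactly.

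I would organize the write-up as: (1) reduce to sorted encodings and state the prefix inequality $\sum_{i\le k} c_{(i)} \ge c_{(k+1)}-1$, proving it by exhibiting the gap integer; (2) deduce $\sum_{i\le k} c_{(i)} \ge 2^k-1$ for all $k$ up to the first index where a coefficient would need to exceed $\mu^x/2$ in a minimal encoding — more carefully, show that at least $\rho = \lfloor\log\mu^x\rfloor+1$ coefficients are "spent" before the partial sums can exceed roughly $\mu^x$; (3) bound the number of remaining coefficients below by the capped-growth argument $\kappa^x - \sum_{i\le\rho}c_{(i)} \le (d-\rho)\mu^x$; (4) combine to get $d \ge \rho + \eta + [\nu-\eta\mu^x \ne 0] = d^x$. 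The main obstacle I anticipate is step (2): making the "at least $\rho$ small coefficients are needed" claim fully rigorous, because an adversarial encoding need not use powers of two and could interleave medium-sized coefficients; I would handle this by the prefix inequality, which forces $c_{(k+1)} \le 1 + \sum_{i\le k}c_{(i)}$ and hence $\sum_{i\le k+1}c_{(i)} \le 2\sum_{i\le k}c_{(i)} + 1$, giving $\sum_{i\le k}c_{(i)} \le 2^k - 1$; so to reach a partial sum of size $\ge \mu^x$ one genuinely needs $k \ge \log(\mu^x+1)$, i.e. at least $\rho$ coefficients, regardless of how the adversary chooses them. The remaining arithmetic is the routine bookkeeping matching $\lceil\nu/\mu^x\rceil$ to $\eta$ and the conditional $+1$.
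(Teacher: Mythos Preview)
Your plan is correct and takes a genuinely different route from the paper. The paper argues by contradiction: it posits a shorter encoding $f$, extracts a \emph{minimal sub-encoding} $f_\mu$ of $f$ that represents $\{0,\ldots,\mu^x\}$, and then splits into the cases $d_{f_\mu}=\rho$ and $d_{f_\mu}>\rho$, in each case comparing the residual sum $\kappa^x-\sum_{\alpha\in f_\mu}\alpha$ against $\nu$ to force enough additional coefficients. Your approach is instead a direct lower bound: sort the coefficients, prove the classical prefix inequality $c_{(k+1)}\le 1+\sum_{i\le k}c_{(i)}$ from completeness, deduce $\sum_{i\le k}c_{(i)}\le 2^k-1$, and then combine this with the cap $c_i\le\mu^x$ on the tail to get $\kappa^x\le (2^\rho-1)+(d-\rho)\mu^x$, hence $d\ge\rho+\lceil\nu/\mu^x\rceil=d^x$. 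Your argument is cleaner and more self-contained (it isolates the one structural fact---the prefix inequality---that drives everything), whereas the paper's version leans on informal appeals to the ``optimality of binary encoding'' and a somewhat loose case analysis. One small slip to fix when you write it up: in your step~(2) you wrote $\sum_{i\le k}c_{(i)}\ge 2^k-1$, but the inequality you actually need and correctly derive in the following paragraph is $\sum_{i\le k}c_{(i)}\le 2^k-1$; make sure the direction is consistent throughout. You should also record explicitly that $d\ge\rho$ (which follows from $\kappa^x\ge 2^\rho$ and $\sum c_i\le 2^d-1$) before applying the tail bound.
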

\begin{proof}
Our arguments rely on the optimality of binary encoding, or, more rigorously, the following points: 
\begin{enumerate}[(a)]
\item \label{binaryEncdPoint1} binary encoding has minimum width; in other words, the binary encoding of values  $\{ 0, 1, \ldots, \mu^x \}$ has width $d_B = \lfloor \log \mu^x \rfloor +1 $, and no other $\mu^x$-complete integer encoding could have a width less than $d_B$.
\item \label{binaryEncdPoint2}  binary encoding of width $d_B$, i.e., $\l(2^0, \ 2^1, \ 2^2, \ \ldots, \ 2^{d_B-1} \r)$, encodes integers of maximum value $2^{d_B} -1$; i.e., $\sum_{i=0}^{d_B-1} 2^i = 2^{d_B } -1$. Notice that by \eqref{binaryEncdPoint1}, this is the largest number that can have an encoding of width $d_B$.
\end{enumerate}

It is worth mentioning that if $\kappa^x < 2^{\lfloor \log \mu^x \rfloor +1}$, then the bounded-coefficient encoding has width $\lfloor \log \kappa^x \rfloor + 1$, and by \eqref{binaryEncdPoint1} has minimum width. 

For more-general cases, our proof is by contradiction. Suppose $c$ is the bounded-coefficient encoding derived by Algorithm \ref{alg:bounded-coeff}, and $f$ is a bounded-coefficient encoding of smaller width; i.e., $d_c> d_f$. 
Let $f_\mu$ be the minimal sub-encoding of $f$ that encodes $\{ 0, 1, \ldots, \mu^x\}$, and let $d_{f_\mu} \le d_f$ be the width of $f_\mu$. Similarly, $c_\mu$ in the bounded-coefficient encoding derived by Algorithm \ref{alg:bounded-coeff} refers to the sub-encoding of $c$ required to encode integers less than or equal to $\mu^x$. As implied by Algorithm \ref{alg:bounded-coeff} and \eqref{bounded-coeff_eq}, $d_{c_\mu} = \rho$. 
By \eqref{binaryEncdPoint1}, we conclude $d_{f_\mu} \ge \rho$. We can, now, consider two cases: $d_{f_\mu} = \rho$ and  $d_{f_\mu} > \rho$.

Consider the case where $d_{f_\mu} = \rho$. By \eqref{binaryEncdPoint1} and \eqref{binaryEncdPoint2}, $\sum_{\alpha \in c_\mu} \alpha \ge \sum_{\alpha \in f_\mu} \alpha$; therefore, $\kappa^x -  \sum_{\alpha \in f_\mu} \alpha \ge \nu$, where $\nu$ is as defined in Algorithm \ref{alg:bounded-coeff}. Since all of the coefficients need to be bounded by $\mu^x$, and by the fact that $\frac{\kappa^x -  \sum_{\alpha \in f_\mu^x} \alpha }{\mu^x} \ge \frac{\nu}{\mu^x}$, we conclude that having $d_f < d_c$ is impossible. 

Consider now the case that $d_{f_\mu} \ge \rho + 1$. As all of our coefficients are bounded by $\mu^x$, $\sum_{\alpha \in f_\mu} \alpha < 2\mu^x$; otherwise, this contradicts that $f_\mu$ is the minimal sub-encoding required to encode integers less than or equal to $\mu^x$. While $\sum_{\alpha \in f_\mu} \alpha < 2\mu$, $\sum_{\alpha \in c_\mu } \alpha  + \mu \ge 2\mu$. Similar to the previous case, we conclude that $\kappa^x - \sum_{\alpha \in f_\mu} \alpha > \kappa^x - \sum_{\alpha \in c_\mu } \alpha  + \mu^x$; hence, $d_f < d_c$ would not be possible.
\qed
\end{proof}
As mentioned earlier, a potentially advantageous property of an integer encoding is uniform or low-variant redundancy. In general, the bounded-coefficient encoding has redundancy because we have more than one coefficient with a value of $\mu$. It is, however, worth mentioning that forcing the following constraints on the binary variables of \eqref{intEncoding_sec2} makes the word count of each integer value unique: 
\begin{equation}
\begin{array}{rcll}
\sum_{j=1}^{\rho} c_j y_j & \ge & (2^{\rho} - \mu^x) y_i  & \text{for }  i=\rho +1, \\
y_{i} &\ge& y_{i+1} &  \text{for } i=\rho+1, \ldots, d^x-1,\\
\sum_{j=1}^{\rho} c_j y_j &\ge& (2^{\rho} - c_{d^x}) y_{d^x}.
\end{array}
\end{equation}

Note that $(2^{\rho} - \mu^x) \le \mu^x$; however, $(2^{\rho} - c_{d^x})$ may not necessarily be less than or equal to $\mu^x$. If we wish to limit the coefficients of our constraints to $\mu^x$, we can substitute $(2^{\rho} - c_{d^x}) y_{d^x}$ in the right-hand side of the last inequality with $\mu^x y_{d^x-1}y_{d^x} + (2^{\rho} - \mu^x - c_{d^x}) y_{d^x}$ whenever $2^{\rho} - c_{d^x} > \mu^x$.  Handling constraints, despite being possible, is non-trivial for QA; it may introduce exceedingly large coefficients (see discussion in \cite{karimi_GD}), and could be contradictory to the purpose of this work. 

As we presented earlier in \eqref{binToSpin} and \eqref{int2SpinEncoding}, we may directly encode integer variables into spin variables. Moreover, by the fact that in the bounded-coefficient encoding $\sum_{j=1}^{d^{x}} c_j = \kappa^x$, the encoding of each integer variable into spin variables would be
\begin{equation}
\label{intToSpin_BC}
x = \frac{1}{2} \l ( \kappa^x + \sum_{j=1}^{d^{x}} c_j^{x} s_j^{x} \r). 
\end{equation}
In the next section, where we try to find the upper bound on the coefficients of the encoding, i.e., $\mu^x$, we use the integer-to-spin transformation. 

\section{\large Finding the Upper Bound on the Coefficients of the Encoding}
\label{sec:findingUB}
We explained earlier, in Section \ref{sec:intro}, that the QA scales the coefficients of an Ising model to a limited range and has low precision. We also mentioned that resilience to noise could be helpful in overcoming the issue of precision. We propose that restricting the range of the coefficients of the problem could improve resilience. To this end, we wish to find upper bounds on the coefficients of the encoding, i.e., $\mu$, such that after the encoding of the integer problem \eqref{UIQP} to an Ising model \eqref{Ising_def}, the ratio of the smallest local field in magnitude to the largest one exceeds a threshold, i.e., 
\begin{equation}
\label{localFieldsRatio}
\frac{ \min_i |h_i|}{\max_i |h_i|} \ge \eps_l.
\end{equation}
Similarly, we wish to bound the ratio of the smallest coupler to the largest one:
\begin{equation}
\label{couplerRatio}
\frac{ \min_{i,j} |J_{ij} | }{\max_{i,j} |J_{ij} | } \ge \eps_c.
\end{equation}

Alternatively, one may desire to have the local fields biases and couplings' strengths well-separated to avoid the effect of noise, or in other words, having 
\begin{equation*}
\min_{i, j} \l | h_i - h_j \r | \ge \eps_l, 
\end{equation*}
and 
\begin{equation*}
\min_{\substack{i,j \\ l,k}} \l | J_{ij} - J_{lk} \r | \ge \eps_c,
\end{equation*}
could be desirable. In this paper, however, we focus merely on inequalities \eqref{localFieldsRatio} and \eqref{couplerRatio}.

Suppose we have a quadratic function 
\beq
\label{UIQP_obj}
f(x)= x^t Q x + q^t x,
\eeq
where $Q$ is symmetric and the domain of $f(x)$ is  $x=\l [x_1, x_2, \ldots, x_n \r]^t$ for $x_i \in \{0, \ 1, \ \ldots, \ \kappa^{x_i}\}$. Note that this is the function which we aim to minimize in \eqref{UIQP}. Also, note that since \mbox{$x_i x_j = x_j x_i$} for two integers $x_i$ and $x_j$, we may substitute $Q$ with $\frac{1}{2} (Q +Q^t)$ when $Q$ is not symmetric. Of course, everything presented here is under the assumption that the coefficients of the problem (before any encoding) respect inequalities \eqref{localFieldsRatio} and \eqref{couplerRatio}, i.e., unary encoding satisfies them. Let us define the encoding matrix denoted by $C$ as
\beq
\label{encdMatrix}
C= 
\begin{bmatrix}
c^{x_1}_1& c^{x_1}_2 & \dots & c^{x_1}_{d^{x_1}} & 0 & 0 & \dots & 0 & 0 &  \dots & \dots & \dots & 0 & 0& 0 & \dots & 0 \\ 
0 & 0 & \dots & 0& c^{x_2}_1& c^{x_2}_2 & \dots & c^{x_2}_{d^{x_2}}  & 0& \dots & \dots & \dots & 0 & 0& 0 & \dots & 0 \\ 
\vdots & \vdots & \ddots & \vdots & \vdots & \vdots & \ddots & \vdots  & \vdots & \ddots & \ddots & \ddots & \vdots & \vdots & \vdots & \ddots & \vdots \\ 
0 & 0 & \dots & 0& 0 & 0 & \dots & 0& 0& \dots & \dots & \dots & 0 & c^{x_n}_1& c^{x_n}_2 & \dots & c^{x_n}_{d^{x_n}}  
\end{bmatrix}, 
\eeq
and the vector of spin variables as
\beq
s = \begin{bmatrix}
s^{x_1}_1& s^{x_1}_2 & \dots & s^{x_1}_{d^{x_1}} &s^{x_2}_1& s^{x_2}_2 & \dots & s^{x_2}_{d^{x_2}} & \dots & \dots & \dots & s^{x_n}_1& s^{x_n}_2 & \dots & s^{x_n}_{d^{x_n}}
\end{bmatrix}^t. 
\eeq
Using \eqref{intToSpin_BC}, we conclude that the encoding of integer variables, $x$,  into spin variables, $s$, refers to the following substitution:
\beq
\label{intToSpin_vecForm}
x = \frac{1}{2} \l ( \kappa + C s \r), 
\eeq
where $\kappa = \l[\kappa^{x_1}, \ \kappa^{x_2}, \ \ldots, \ \kappa^{x_n} \r]^t $ is the vector of all of the upper bounds on the integer variables. Using this substitution in \eqref{UIQP_obj}, we get
\beqa
\label{IsingObjFun}
f(s) = \frac{1}{4} s^t \l ( C^t Q C - \Diag ( \diag(C^t Q C)) \r) s & + & \frac{1}{2} \l( C^t Q \kappa + C^t q \r)^t s \notag \\
&+& \frac{1}{4} \l( \kappa ^t Q \kappa + e^t  \Diag ( \diag(C^t Q C)) e + 2 q^t \kappa \r).
\eeqa
We are excluding the diagonal entries of $C^t QC$ in the quadratic term because they correspond to the square of spin variables, i.e., $s_i^2$, and $s_i^2$ is a constant equal to one.  The local field bias corresponding to each spin variable and the coupling strength corresponding to each pair of spin variables, derived in \eqref{IsingObjFun}, is summarized below:
\beqa
s_j^{x_i} & : & \frac{1}{2} \l [ Q \kappa + q  \r ]_i c_j^{x_i} =  \frac{1}{2} \l(q_i + \sum_{k=1}^{n} Q_{ik} \kappa_k \r) c_j^{x_i}, \label{hint0} \\
s_k^{x_i}s_l^{x_i} & : & \frac{ \l(Q_{ii} c_k^{x_i} c_l^{x_i}\r) }{2} \quad  \text{for } \ k,l \in \{1, \ldots, d^{x_i}\} \  \text{ and } \ k<l,  \label{hint1}\\
s_k^{x_i}s_l^{x_j} & :  & \frac{ \l(Q_{ij} c_k^{x_i} c_l^{x_j}\r) }{2}  \quad \text{for }\  k \in \{1, \ldots, d^{x_i}\} ,\ l \in \{1, \ldots, d^{x_j}\}, \  \text{ and } \ i,j \in \{1, \ldots, n\} : i<j, \label{hint2} \ \ \   
\eeqa
where $\l [ Q \kappa + q  \r ]_i$ in the first line denote the $i$-th entry of vector $Q \kappa + q$, as expected, and \eqref{hint1} and \eqref{hint2} incorporate the fact that $s_k^{x_i}s_l^{x_i} = s_l^{x_i}s_k^{x_i}$ and $s_k^{x_i}s_l^{x_j}= s_l^{x_j}s_k^{x_i}$. 

Since all of the local fields biases and couplings' strengths share the $\frac{1}{2}$ factor and after rescaling our concern is their ratio, we will drop the $\frac{1}{2}$ factor for our calculations in the rest of this section. 
From the previous section and the derivation of the bounded-coefficient encoding, we conclude that the smallest coefficient used in the encoding is 1; hence, the minimum absolute value of the local fields biases is 
\beq \notag
m_l = \min_{i} \left \{ \ \l | \l [ Q \kappa + q  \r ]_i  \r | \ \right \}, 
\eeq
and the minimum absolute value of couplings' strengths is 
\beq \notag
m_c = \min_{i,j} \left \{ \ \l | Q_{ii} \r| , \l| Q_{ij} \r| \ \right \}.
\eeq
The maximum absolute value of the local fields biases among all spin variables corresponding to an integer variable $x_i$ occurs at $\mu^{x_i}$---to be found---for each $i\in \{1,\ldots, n\}$. Therefore, by \eqref{localFieldsRatio}, we wish to have: 
\beqa
\label{LFratio}
\frac{m_l} { \l | \ \l [ Q \kappa + q  \r ]_i \ \r| \mu^{x_i} } \ge \eps_l.
\eeqa
Similarly, \eqref{couplerRatio} enforces the following conditions for the couplers: 
\beq
\label{Cratio}
\frac{m_c} { |Q_{ii}| (\mu^{x_i})^2 } \ge \eps_c, \ \text{and} \ \frac{m_c} { |Q_{ij}| \mu^{x_i}\mu^{x_j} } \ge \eps_c.
\eeq
Using \eqref{LFratio} and \eqref{Cratio}, we wish to obtain $\mu^{x_i}$'s that satisfy the following set of inequalities:
\beqa
\label{muIneqList}
\mu^{x_i} & \le& \frac{m_l} { \l |  \l [ Q \kappa + q  \r ]_i   \r| \eps_l } \label{muIneqList_1} \\
\mu^{x_i} &\le & \sqrt{\frac{m_c}{|Q_{ii} |\eps_c} } , \label{muIneqList_2} \\
\mu^{x_i} \mu^{x_j} &\le & \frac{m_c}{|Q_{ij}| \eps_c}.\label{muIneqList_3}
\eeqa

We may solve a feasibility problem to find a solution to the above set of inequalities, i.e., an optimization or auxiliary objective function along with these inequalities. A well-justified objective function could be $\max \  \min_i \{ \mu^{x_i} \}$. Solving such problems is normally costly because \eqref{muIneqList_3} is non-convex and $\mu^{x_i}$'s are required to be integers. Alternatively, we present an algorithm below that heuristically finds $\mu^{x_i}$'s.

Note that any set of $\mu^{x_i}$'s satisfying the above inequalities will guarantee that \eqref{localFieldsRatio} and \eqref{couplerRatio} hold. 
Our proposed algorithm for finding $\mu^{x_i}$'s first initializes $\mu^{x_i}$'s using inequalities \eqref{muIneqList_1} and \eqref{muIneqList_2}. If \eqref{muIneqList_3} is satisfied for all $i$ and $j$ at this step, it terminates. Otherwise, it greedily decrease $\mu^{x_i}$ or $\mu^{x_j}$ for an $i$ and $j$ pair for which the failure happened. In this process, we take $\frac{\kappa^{x_i} }{\mu^{x_i}}$ as an estimate on the width of the encoding, so when we want to decrease either $\mu^{x_i}$ or $\mu^{x_j}$, we choose the one that gives a lower combined width.  See Algorithm \ref{alg:Find_$mu$_Ising} for a formal presentation of this algorithm.

\begin{algorithm}
\caption{Finding the Upper Bounds on the Coefficients of the Encoding}
\label{alg:Find_$mu$_Ising}
\begin{tabbing} 
\quad \quad \= \quad \quad  \=\quad \=\quad \=\kill \\
Inputs:\\
\> \normalfont $\kappa, \ q,\ Q, \ \eps_l, \ \eps_c$\\
\> {\bf {compute}}  $Q\kappa + q$\\
\> {\bf{set}} $m_l =  \min_{i} \left \{ \l | \  \l[Q\kappa + q \r]_i \ \r | \right \}$,   and $m_c = \min_{i,j} \left \{ |Q_{ii}| , |Q_{ij}|\right \} $ \\
{Output:}\\
\> \normalfont  $\mu^{x_i}$ for $i= 1, 2, \ldots , n$ \\ 
\\
\normalfont {\bf{initialize}} $\mu^{x_i}= \l \lfloor \min \left \{  \frac{m_l} { \l | \ \l [ Q \kappa + q  \r ]_i  \ \r| \eps_l }, \ \sqrt{\frac{m_c}{|Q_{ii}| \eps_c} } \right \} \r \rfloor$\\
\\
{\bf while } \normalfont any $ \l(\mu^{x_i} \mu^{x_j} > \frac{m_c}{|Q_{ij}| \eps_c} \r) $\\
\\
\> \normalfont let $i,j = \arg \max_{i,j} \left \{ \mu^{x_i} \mu^{x_j} - \frac{m_c}{|Q_{ij}| \eps_c} \right \} $\\
\\
\> \normalfont let $\xi_i = \frac{\kappa^{x_i}}{ \mu^{x_i} -1} + \frac{\kappa^{x_j}}{ \mu^{x_j}}$ and  $\xi_j = \frac{\kappa^{x_i}}{ \mu^{x_i} } + \frac{\kappa^{x_j}}{ \mu^{x_j} -1}$\\
\\
\> \bf {if} $\xi_i < \xi_j$\\
\>\> $\mu^{x_i} =\mu^{x_i}-1$\\
\> \bf{else}\\
\>\> $\mu^{x_j} =\mu^{x_j}-1$	\\
\end{tabbing} 
\end{algorithm}

It is worth mentioning that if we wish to keep $\mu^{x_i}$'s for all integer variables equal, we may use the value
\beq
\mu = \min_i \ \{ \mu^{x_i} \}.
\eeq
Moreover, if in \eqref{UIQP_obj} the quadratic terms were missing, i.e., $f(x)= \ q^t x$, we would get 
\beq
f(s) = \frac{1}{2} q^t \l ( \kappa + Cs\r) = \frac{1}{2} q^t \kappa + \frac{1}{2} \l( C^t q \r)^t s.
\eeq
Ignoring the $\frac{1}{2}$ factor (since it is common for all spin variables), we get $m_l = \min_i \ \{ \l |q_i \r| \}$, and our ratio condition reduces to 
\beq
\frac{m_l} { |q_i| \mu ^{x_i} } \ge \eps_l;
\eeq
therefore,
\beq
 \mu ^{x_i} = \l\lfloor \frac{m_l} { |q_i| \eps_l} \r\rfloor.
\eeq

Finally, we would like to point out that unlike QA, in several heuristic methods, variables tend to be binary, i.e., in the $\{0,1\}$ domain referred to as binary. In Appendix \ref{sec:intToBin}, we present a modification of the method in this section that could be employed in this case.

\section{Numerical Experiment}
\label{sec:res}
In this section, we test the bounded-coefficient encoding. We compare binary and bounded-coefficient encodings on ten randomly generated instances. To diversify our instances, however, we use several procedures to generate them. Let us define the set $U_\alpha = \{ 0, \ \pm 1, \ \pm 2, \ldots, \ \pm \alpha\}$. In all of our instances, we have five integer variables, i.e., $n=5$; the upper-bound on all of the integer variables is 50, i.e., $\kappa^{x_i} = 50$ for $i = 1, \ldots, 5$; and the matrix $Q$ in the quadratic term has a sparsity around 50\%. In half of our instances, the generated $Q$ is positive definite, and in the rest it is indefinite; these two categories are referred to as convex and non-convex instances, respectively. 

For the convex instances, entries of matrix $Q$ are initially drawn from $U_2$; then, $\lambda I$ is added to $Q$ to make $Q$ positive definite. Our choice of $\lambda$ is $\l \lceil | \min \{ \lambda_{\min}, 0 \} | + r \r \rceil$, where $\lambda_{\min}$ is the minimum eigenvalue of $Q$ and $r$ is a random number between 0 and 1. A feasible sparse integer vector, i.e., $x^\ast$, where $x^\ast_i \in \{0,1,\ldots, 50\}$ if $x^\ast_i \neq 0$, is then generated, and we set $q = -2Qx^\ast$. Note that by the fact that $Q$ is positive definite and by our choice of $q$, $x^\ast$ is the unique optimal solution to the generated instance of \eqref{UIQP}. These instances are shown with the prefix {\sf convex} in Tables \ref{table:res_BC} and \ref{table:res_binary}.
In the non-convex instances, entries of $Q$ and $q$ are from separate $U_\alpha$'s. We denote these instances with pairs $(U_Q, U_q)$, where $U_Q$ and $U_q$ are distributions for the quadratic terms ($Q$) and linear terms ($q$), respectively. Our data sets are $(U_2, U_{200}),\ (U_5, U_{200}), \ (U_5, U_{10}), \ (U_5, U_{100}), \ $ and $(U_{10}, U_0)$.

As mentioned earlier, we take {\emph{resilience}} as the measure of success. More specifically, we hypothesized that the bounded-coefficient encoding is a technique for representing a UIQP problem as a UBQP problem that is more robust against noise, and resilience directly measures the robustness of an instance of UBQP against noise. In \cite{zoshk2016}, the resilience $R$ of an instance is defined as
\beq
R= \frac{n_{\text{same} } } { n_{\text{trial} } }, 
\eeq
where $n_{\text{same}}$ is the number of times, among all $n_{\text{trial}}$, that the original ground state does not change with random noise perturbations. In other words, $n_{\text{trial} }$ different noise matrices with entries drawn from the normal distribution $\mathcal N(0, \eps)$ are generated; each  is added to a scaled Ising model and the perturbed Ising model is solved exactly. $n_{\text{same}}$,  then, refers to the number of times that the perturbed Ising model returns the same ground state as the unperturbed one.  
The number of trials, $n_{\text{trial} }$, in our experiment is set to 10. We take $\eps_l$ and  $\eps_c$ to be 0.01, and we scale Ising models to $J \in [-1, 1]$ before adding the noise. After finding the upper bound on the coefficients of the encoding, with Algorithm \ref{alg:Find_$mu$_Ising} and $\eps_l = \eps_c = 0.01$, we obtain $\text{Ising}_{\text{bounded}}$ through the bounded-coefficient encoding. Also, by using the binary encoding, we derive $\text{Ising}_{\text{binary}}$. Then, we measure the resilience of each of these Ising models at $\eps \in \{ 0.001, 0.002, \ldots, 0.01\}$. The resilience of $\text{Ising}_{\text{bounded}}$ and $\text{Ising}_{\text{binary}}$ are summarized in Tables \ref{table:res_BC} and \ref{table:res_binary}, respectively. The bounded-coefficient encoding has significantly outperformed the binary encoding; $\text{Ising}_{\text{bounded}}$ is five times more resilient to noise than $\text{Ising}_{\text{binary}}$, on average. Also, in most of our test cases, the convex instances and $(U_{10}, U_0)$, the resilience is almost zero with the binary encoding, except for a few exceptions at $\eps=0.001$, whereas for bounded-coefficient encoding, the resilience stays above zero and is considerably large for $\eps \le 0.005$. In order to visualize the difference between the two encodings, we plot the average resilience (over all instances) with respect to $\eps$, the standard deviation of the added noise, in Figure \ref{fig:res_avg}. It is obvious from this plot that the resilience with binary encoding stays marginally above 0, but for the bounded-coefficient encoding it decreases at a slower rate and remains considerably above 0.

\begin{table} \caption{Resilience with Bounded-Coefficient Encoding.}
\begin{center}
\begin{threeparttable}
\begin{tabular}{|c|cccccccccc|}
\hline
\diagbox{Ins.}{$\eps$} & 0.001 & 0.002 & 0.003 & 0.004 & 0.005 & 0.006 & 0.007 & 0.008 & 0.009 & 0.01 \\
\hline 
\sf{convex-1} &1.0 &   1.0 &   0.6 &  0.4 &  0.8 &  0.5 & 0.2 &  0.1 & 0.1&   0.0\\
\sf{convex-2} &0.7&0.5&0.4&0.1&0.2&0.0&0.0&0.2&0.0&0.0\\
\sf{convex-3} &1.0&0.7&0.6&0.5&0.2&0.1&0.0&0.1&0.0&0.0\\
\sf{convex-4} &1.0&0.9&0.6&0.3&0.2&0.1&0.1&0.0&0.1&0.0\\
\sf{convex-5} &1.0&1.0&0.5&0.6&0.3&0.4&0.2&0.2&0.4&0.1\\
$(U_2, U_{200})$ &1.0&1.0&1.0&1.0&0.9&1.0&0.9&0.8&1.0&0.8\\
$(U_5, U_{200})$ &1.0&0.9&0.9&0.9&0.8&0.4&0.7&0.8&0.5&0.6\\
$(U_{10}, U_0)$ &1.0&0.8&0.4&0.3&0.4&0.1&0.4&0.2&0.1&0.2\\
$(U_5, U_{10})$ &1.0&1.0&1.0&0.9&0.7&0.9&0.4&0.4&0.6&0.7\\
$(U_5, U_{100})$ &1.0&1.0&1.0&1.0&1.0&1.0&1.0&0.7&0.7&0.6\\
\hline
average &  0.97  &0.88&  0.7&   0.6&   0.55&  0.45&  0.39&  0.35&  0.35&  0.3 \\
\hline
\end{tabular} 
\label{table:res_BC}
\end{threeparttable}
\end{center}
\end{table}

\begin{table} \caption{Resilience with Binary Encoding.}
\begin{center}
\begin{threeparttable}
\begin{tabular}{|c|cccccccccc|}
\hline
\diagbox{Ins.}{$\eps$} & 0.001 & 0.002 & 0.003 & 0.004 & 0.005 & 0.006 & 0.007 & 0.008 & 0.009 & 0.01 \\
\hline 
\sf{convex-1} &0.1&0.0&0.0&0.0&0.0&0.0&0.0&0.0&0.0&0.0\\
\sf{convex-2} &0.0&0.0&0.0&0.0&0.0&0.0&0.0&0.0&0.0&0.0\\
\sf{convex-3} &0.1&0.0&0.0&0.0&0.0&0.1&0.0&0.0&0.0&0.0\\
\sf{convex-4} &0.0&0.0&0.0&0.0&0.0&0.0&0.0&0.0&0.0&0.0\\
\sf{convex-5} &0.1&0.0&0.0&0.1&0.0&0.0&0.0&0.0&0.0&0.0\\
$(U_2, U_{200})$  &0.6&0.4&0.2&0.1&0.3&0.1&0.3&0.1&0.4&0.2\\
$(U_5, U_{200})$  &0.5&0.7&0.3&0.4&0.5&0.3&0.2&0.5&0.2&0.2\\
$(U_{10}, U_0)$  &0.4&0.0&0.0&0.0&0.0&0.0&0.0&0.0&0.0&0.0\\
$(U_5, U_{10})$  &0.3&0.1&0.1&0.1&0.1&0.0&0.1&0.1&0.1&0.0\\
$(U_5, U_{100})$ &0.5&0.3&0.1&0.4&0.3&0.4&0.3&0.1&0.4&0.1\\
\hline
average &0.26 & 0.15 & 0.07 & 0.11 & 0.12 & 0.09 & 0.09 & 0.08 & 0.11 & 0.05\\
\hline
\end{tabular} 
\label{table:res_binary}
\end{threeparttable}
\end{center}
\end{table}

\begin{figure}
 \begin{center}
  \caption{Resilience Averaged over All of the Instances.}
    \includegraphics[width=10cm]{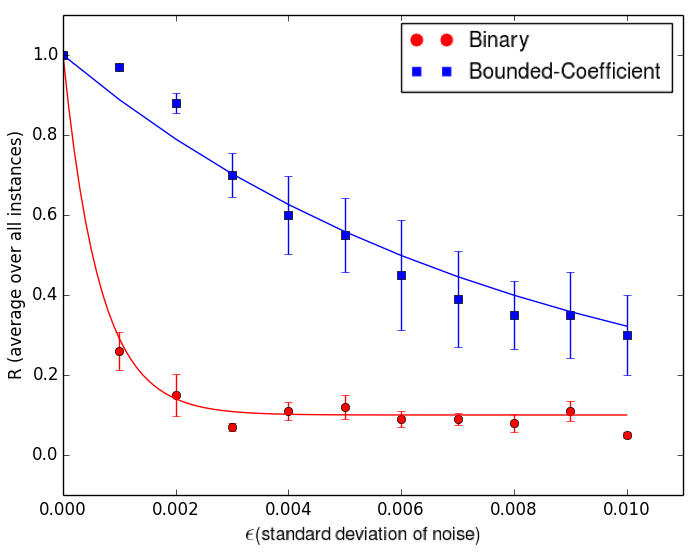}
    \label{fig:res_avg}
 \end{center}   
\end{figure}

Note that in \cite{zoshk2016}, it is suggested to set $\eps = \frac{2}{J_{\max} }$, where $J_{\max}$ is the largest $J_{ij}$. Here, our approach slightly differs. We assume that we know the precision of the machine, and we derive the upper bound on the coefficients of our encoding to accommodate this restriction. The precision of the current D-Wave machine is in order of $10^{-2}$, which motivates our choice of $\eps = 10^{-2}$. 
\section{Conclusion and Discussion}
\label{sec:conclusion} 
In this paper, we presented an encoding to represent an unconstrained integer quadratic programming problem as an Ising model. To deal with the low precision of quantum annealers, we suggested bounding the value of the coefficients in the encoding. This restricts the range of the local fields biases and couplings' strengths in the derived Ising model, thereby creating Ising models that are more robust against noise after scaling. Resilience is used as a metric for the robustness of the Ising models. We compared bounded-coefficient encoding with the binary encoding. We infer from our results that bounding the coefficients of the encoding, as in the bounded-coefficient encoding, significantly improves the resilience of the model. The drawbacks are that the size of the derived Ising model is larger and it introduces redundancy. 

In our proposed technique, we forced the ratios of the minimum absolute value of the local fields biases and the couplings' strengths to their respective maximums exceed a certain threshold tolerance that is correlated or equal to the quantum annealer's precision. However, all of our calculations are embedding-free, i.e., they ignore the Chimera graph structure. In theory, having an embedding with equal chain length for all variables and equal chain connectivity for any present quadratic terms will respect our calculation; however, finding such an embedding is nontrivial. Notice that, after the encoding, all spin variables corresponding to an integer variable are connected; additionally, all spin variables corresponding to $x_i$ and $x_j$ are connected if the term $x_ix_j$ appears in the integer formulation; so, the underlying graph of the Ising model is quite dense. Even for our small instances, the resulting Ising models either exceed the size that can be embedded on the current chip, or the chains' lengths differ in orders of magnitude (e.g., minimum and maximum chain lengths in orders 1 and 10, respectively). Our algorithm can be tested on the future generation of the quantum annealers with improved connectivity. 

According to our results in the previous section, using $\eps_l = \eps_c = 0.01$ to find the upper bound on the coefficients of the encoding, the derived Ising model stays robust against noise, with an average resilience above 0.5, and minimum resilience above $0.1$ for $\eps= 0.005$, i.e., $\frac{\eps_c}{2}$. This could suggest using $2 \eps_{QA}$ as $\eps_l$ and $\eps_c$ for finding the upper bounds on the coefficients of the encoding, where $\eps_{QA}$ is the precision of the quantum annealer. This, however, requires more experimentation and could be an avenue for future study. 

\appendix
\section{(Integer-to-Binary Encoding)}
\label{sec:intToBin}
In this section, we aim to find the upper bounds on the coefficients of the integer encoding when we reduce an unconstrained integer quadratic programming (UIQP) problem to an unconstrained binary quadratic programming (UBQP) problem in the $\{ 0,1\}$ domain.
Similar to what we discussed earlier, we assume that our UIQP problem has the form
\beq
\notag
\begin{array}{lll}
\min &  x^T Q x + q^t x,\\
\text{s.t.} &x_i \in \{0, 1, 2, \ldots, \kappa^{x_i} \} & \quad   \text{for }  i=\{1,2, \ldots, n\},  \nonumber \\ 
\end{array}
\eeq
with $Q$ being symmetric. 

We aim to represent the above problem as $f(y)$ by substituting $x= Cy$, where $C$ is the encoding matrix we had earlier, i.e., 
\beq
C= 
\begin{bmatrix}
c^{x_1}_1& c^{x_1}_2 & \dots & c^{x_1}_{d^{x_1}} & 0 & 0 & \dots & 0 & 0 &  \dots & \dots & \dots & 0 & 0& 0 & \dots & 0 \\ 
0 & 0 & \dots & 0& c^{x_2}_1& c^{x_2}_2 & \dots & c^{x_2}_{d^{x_2}}  & 0& \dots & \dots & \dots & 0 & 0& 0 & \dots & 0 \\ 
\vdots & \vdots & \ddots & \vdots & \vdots & \vdots & \ddots & \vdots  & \vdots & \ddots & \ddots & \ddots & \vdots & \vdots & \vdots & \ddots & \vdots \\ 
0 & 0 & \dots & 0& 0 & 0 & \dots & 0& 0& \dots & \dots & \dots & 0 & c^{x_n}_1& c^{x_n}_2 & \dots & c^{x_n}_{d^{x_n}}  
\end{bmatrix},
\eeq
and
\beq
y = \begin{bmatrix}
y^{x_1}_1& y^{x_1}_2 & \dots & y^{x_1}_{d^{x_1}} &y^{x_2}_1& y^{x_2}_2 & \dots & y^{x_2}_{d^{x_2}} & \dots & \dots & \dots & y^{x_n}_1& y^{x_n}_2 & \dots & y^{x_n}_{d^{x_n}}
\end{bmatrix}^t \in \mathbb B^{\sum_{i=1}^n d^{x_i} }. 
\eeq

After the substitution for $x$, we get the following equivalent binary formulation:
\beq
f(y) = y^t (C^t Q C) y + (C^tq)^t y.
\eeq

Unlike the spin variables for which the diagonal of $C^t Q C$ became constant, the diagonal of $C^t Q C$ is added to the linear term in this scenario since $y_i^2 = y_i$ for $y_i \in \{0,1\}$. Alternatively, we can represent $f(y)$ as 
\beq
f_B  = y^t Q_B y, 
\eeq
where 
\beq
Q_B = \l[ 
\ba{c;{2pt/2pt}c;{2pt/2pt}c;{2pt/2pt}c}
Q_{11} [c^{x_1}][c^{x_1}]^t + \Delta(q_1[c^{x_1}] )  & Q_{12} [c^{x_1}][c^{x_2}]^t & \ \  \cdots \ \ & Q_{1n} [c^{x_1}][c^{x_n}]^t\\
\\
Q_{21} [c^{x_2}][c^{x_1}]^t  & Q_{22} [c^{x_2}][c^{x_2}]^t + \Delta(q_2[c^{x_2}]) & \ \ \cdots \ \ & Q_{2n} [c^{x_2}][c^{x_n}]^t\\
\\
\vdots & \vdots & \vdots & \vdots \\
\\
Q_{n1} [c^{x_n}][c^{x_1}]^t  & Q_{n2} [c^{x_n}][c^{x_2}]^t & \ \ \cdots \ \ & Q_{nn} [c^{x_n}][c^{x_n}]^t + \Delta(q_n[c^{x_n}])
\ea
\r].
\eeq
The diagonal entries of $Q_B$ are the linear terms, i.e., coefficients of the variables $y_j^{x_i}$. We use linear and quadratic terms for binary model $f(y)$ instead of local fields and couplers, respectively. We also refer to inequalities \eqref{localFieldsRatio} and \eqref{couplerRatio} as the ratio inequalities for linear and quadratic terms, respectively. Notice that in these ratios, minimum and maximum coefficients are measured in magnitude, so in our discussion that follows, we consider merely the magnitude of the coefficients.

Considering the fact that $y_k^{x_i}y_l^{x_i}=y_l^{x_i}y_k^{x_i}$ and $y_k^{x_i}y_l^{x_j} = y_l^{x_j}y_k^{x_i}$, the coefficients of linear and quadratic terms are listed below:
\beqa
y_j^{x_i} & : & \l( Q_{ii} \l(c_j^{x_i}\r)^2 + q_i c_j^{x_i}  \r ),\\
y_k^{x_i}y_l^{x_i} & : & 2\l(Q_{ii} c_k^{x_i} c_l^{x_i}\r)  \quad  \text{for } \ k,l \in \{1, \ldots, d^{x_i}\} \  \text{ and } \ k<l,  \\
y_k^{x_i}y_l^{x_j} & :  & 2\l(Q_{ij} c_k^{x_i} c_l^{x_j}\r)  \quad \text{for }\  k \in \{1, \ldots, d^{x_i}\} ,\ l \in \{1, \ldots, d^{x_j}\}, \  \text{ and } \ i,j \in \{1, \ldots, n\} : i<j. \ \ \   
\eeqa
The difference between this case and what was presented in Section \ref{sec:findingUB} is that the coefficients of $y_j^{x_i}$ are no longer linear in $c_j^{x_i}$ (compare with \eqref{hint0}); therefore, the smallest coefficient may no longer occur at $c_j^{x_i}=1$. 
Notice that $Q_{ii} \l(c_j^{x_i}\r)^2 + q_i c_j^{x_i}$ intersects 0 at $c_j^{x_i}= 0$ and $c_j^{x_i} = \frac{-q_i}{Q_{ii}}$; therefore, if $\frac{-q_i}{Q_{ii}} < 1$, then $Q_{ii} \l(c_j^{x_i}\r)^2 + q_i c_j^{x_i}$ takes its minimum at $c_j^{x_i} =1$ and is increasing afterwards. 
When this is the case for all $i = 1, \ldots, n$, slight modification of  Algorithm \ref{alg:Find_$mu$_Ising} is sufficient to find the $\mu^{x_i}$'s. In the algorithm, the modification is at the initialization step of $\mu^{x_i}$'s. Letting
\beq
m_l = \min_{i} \left \{ |Q_{ii} + q_{i}| \right \} \ \text{ and } \ m_c = \min_{i,j} \left \{|Q_{ii}| , |Q_{ij}| \right \}, 
\eeq
the condition that needs to be satisfied for the linear coefficients is
\beqa
\frac{m_l} { |Q_{ii}| (\mu^{x_i})^2 + \sg(q_iQ_{ii}) |q_i| \mu^{x_i}} \ge \eps_l.
\eeqa
Combined with the condition
\beq
\mu^{x_i} \le  \sqrt{\frac{m_c}{|Q_{ii} |\eps_c} } , 
\eeq
we need to initialize $\mu^{x_i}$ as 
\beq
\mu^{x_i}= \l \lfloor \min \left \{ \tilde \mu^{x_i}, \sqrt{\frac{m_c}{|Q_{ii}| \eps_c} } \right \} \r \rfloor, 
\eeq
where
\beq
\tilde \mu^{x_i} = \frac{-\sg(q_iQ_{ii})|q_i| + \sqrt{|q_i|^2 +4 |Q_{ii}|\frac{m_l}{\eps_l}}}{2|Q_{ii}|}.
\eeq
It is worth mentioning the special cases where $\sg(Q_{ii}) = \sg(q_i)$ (resulting in $\frac{-q_i}{Q_{ii}} < 0$), and $Q_{ii}=0$ or $q_i= 0$ belong to the above category, where $Q_{ii} \l(c_j^{x_i}\r)^2 + q_i c_j^{x_i}$ attains its minimum at 1. 

In the general cases where there exists an $i$ such that $\frac{-q_i}{Q_{ii}}\ge 1$, satisfying the ratio condition on the linear terms is more complicated than what we discussed above. Not only might the minimum coefficient no longer occur at $1$, but also the maximum could occur at either $\mu^{x_i}$ or $\frac{-q_i}{2Q_{ii}}$, where the derivative of the function $h(c_j^{x_i}) = Q_{ii} \l(c_j^{x_i}\r)^2 + q_i c_j^{x_i}$ is zero. Our approach for these general cases is to first derive $\mu^{x_i}$'s that satisfy the ratio condition for the quadratic terms, and then adjust them accordingly to meet the ratio constraint on the linear terms.

Similar to the previous case, the minimum coefficient on the quadratic terms is
\beq\label{min_coup}
m_c = \min_{i,j} \left \{|Q_{ii}| , |Q_{ij}| \right \},
\eeq
and the ratio condition on quadratic terms enforces
\beq \label{min_coup_ineq}
\frac{m_c} { |Q_{ii}| (\mu^{x_i})^2 } \ge \eps_c, \ \text{and} \ \frac{m_c} { |Q_{ij}| \mu^{x_i}\mu^{x_j} } \ge \eps_c, 
\eeq
or, equivalently,
\beqa \label{cnd_ineq1}
\mu^{x_i} &\le & \sqrt{\frac{m_c}{|C_{ii}| \eps_c} } , \label{hintA0}\\ 
\mu^{x_i} \mu^{x_j} &\le & \frac{m_c}{|C_{ij}| \eps_c} \label{hintA1}.
\eeqa
We may now initialize $\mu^{x^i}$ as $\left \lfloor \sqrt{\frac{m_c}{|Q_{ii}| \eps_c}} \right \rfloor$ and use the loop of Algorithm \ref{alg:Find_$mu$_Ising} to satisfy \eqref{hintA1}, i.e., Algorithm \ref{alg:Find_$mu$_binary_part1}.
\begin{algorithm}
\caption{Finding $\mu^{x_i}$ for the Ratio Condition on Quadratic Terms}
\label{alg:Find_$mu$_binary_part1}
\begin{tabbing} 
\quad \quad \= \quad \quad  \=\quad \=\quad \=\kill \\
Inputs: \normalfont $\kappa, \ q,\ Q, \ \eps_c$\\
\> {\bf{set}} $m_c = \min_{i,j} \left \{ |Q_{ii}| , |Q_{ij}|\right \} $ \\
{Output:}\\
\> \normalfont  $\mu^{x_i}$ for $i= 1, 2, \ldots , n$ \\ 
\\
\normalfont {\bf{initialize}} $\mu^{x_i}= \l \lfloor \sqrt{\frac{m_c}{|Q_{ii}| \eps_c} } \r \rfloor$\\
\\
{\bf while } \normalfont any $\l(\mu^{x_i} \mu^{x_j} > \frac{m_c}{|Q_{ij}| \eps_c} \r)$\\
\\
\> \normalfont let $i,j = \arg \max_{i,j} \left \{ \mu^{x_i} \mu^{x_j} - \frac{m_c}{|Q_{ij}| \eps_c} \right \} $\\
\\
\> \normalfont let $\xi_i = \frac{\kappa^{x_i}}{ \mu^{x_i} -1} + \frac{\kappa^{x_j}}{ \mu^{x_j}}$ and  $\xi_j = \frac{\kappa^{x_i}}{ \mu^{x_i} } + \frac{\kappa^{x_j}}{ \mu^{x_j} -1}$\\
\\
\> \bf {if} $\xi_i < \xi_j$\\
\>\> $\mu^{x_i} =\mu^{x_i}-1$\\
\> \bf{else}\\
\>\> $\mu^{x_j} =\mu^{x_j}-1$	\\
\end{tabbing}
\end{algorithm}

After $\mu^{x_i}$'s are calculated to satisfy the conditions \eqref{hintA0} and \eqref{hintA1}, we need to check the ratio condition for the linear terms. Although some of the integer values $\{ 1, 2, \ldots, \mu^{x_i} \}$ may not appear in our encoding, knowing which integers will appear prior to finding $\mu^{x_i}$'s is not trivial. The algorithm presented below guarantees that the ratio condition on the linear terms holds if any of these integer values appear in the encoding. Let us categorize the indices based on where the minimum and maximum linear coefficients occur; we introduce the following sets of indices for this purpose:

\beqa
\mathcal I_0^m &=& \left \{ i: \sg(Q_{ii} ) \neq \sg( q_i ) \text{ and } \frac{-q_i}{Q_{ii}}=1 \right \}, \notag \\
\mathcal I_1^m &=& \left \{ i: \sg(Q_{ii} ) = \sg( q_i ) \right \} \cup \notag \\
&&  \left \{ i: \sg(Q_{ii} ) \neq \sg( q_i )  \text{ and } 0\le \frac{- q_i}{Q_{ii}} < 1 \right \} \cup \notag \\
&& \left \{i: \sg(Q_{ii} ) \neq \sg( q_i ) \text{ but } \mu^{x_i} < \left \lfloor \frac{- q_i}{Q_{ii}} \right \rfloor \right \}\cup \notag\\
&& \left \{ i: \sg(Q_{ii} ) \neq \sg( q_i ) \text{ but }  \frac{- q_i}{Q_{ii}} \text{ is an integer greater than 1}\right \}, \notag\\
\mathcal I_2^m &=& \left \{ i: \sg(Q_{ii} ) \neq \sg( q_i )  \text{ and }   \frac{- q_i}{Q_{ii}} > 1 \text{ and }  \frac{- q_i}{Q_{ii}} \text{ is not an integer }\text{ and } \mu^{x_i} \ge \left \lfloor \frac{- q_i}{Q_{ii}} \right \rfloor \right \}. \label{indList_m}
\eeqa

and 
\beqa
\mathcal I_1^M &=& \left \{ i: \sg(Q_{ii} ) = \sg( q_i ) \right \} \cup \notag \\
&&  \left \{i: \sg(Q_{ii} ) \neq \sg( q_i )  \text{ and } 0\le \frac{- q_i}{2Q_{ii}} < \frac{1}{2} \right \} \cup \notag \\
&& \left \{i: \sg(Q_{ii} ) \neq \sg( q_i ) \text{ but } \mu^{x_i} \le \round \left(\frac{- q_i}{2Q_{ii}}\right) \right \}, \notag \\
\mathcal I_2^M &=& \left \{ i: \sg(Q_{ii} ) \neq \sg( q_i )  \text{ and }   \frac{- q_i}{2Q_{ii}} \ge \frac{1}{2} \text{ and } \mu^{x_i} > \round \left(\frac{- q_i}{2Q_{ii}}\right)\right \}. \label{indList_M}
\eeqa

Note that $\mu^{x_i}$'s returned by Algorithm \ref{alg:Find_$mu$_binary_part1} are integers. The sets with $m$ and $M$ superscripts are formed to facilitate computing minimum and maximum linear coefficients, respectively. For indices $i \in \mathcal I^m_0$, the minimum linear term happens at 2; for indices in $ \mathcal I^m_1$, it happens at 1, and for indices in $ \mathcal I^m_2$, it happens at one of the two integers closest to $\frac{-q_i}{Q_{ii}}$. Similarly, for the indices in $ \mathcal I^M_1$, the maximum linear coefficient happens at $\mu^{x_i}$, whereas for indices in $ \mathcal I^M_2$, it happens at either $\mu^{x_i}$ or at the closest integer to $\frac{-q_i}{2Q_{ii}}$, i.e., $\round \left(\frac{- q_i}{2Q_{ii}}\right)$. 


After categorizing the indices in the above sets, we may form the arrays $v^m$ and $v^M$, which represent the minimum and maximum coefficients of the linear terms for each variable, respectively: 
\beq
v^m_i = \begin{cases}
\left(|4Q_{ii} + 2q_i |, 2, i \right) & \text{if } i \in \mathcal I^m_0 \\
\left(|Q_{ii} + q_i |, 1, i \right) & \text{if } i \in \mathcal I^m_1 \\
\left( \min_{x \in \mathcal S_i } |Q_{ii} x^2 + q_i x|  ,  \argmin_{x \in\mathcal S_i } |Q_{ii} x^2 + q_i x|  , i \right) 
: \mathcal S_i = \l\{ x \in \left\{ \left \lfloor \frac{- q_i}{Q_{ii}} \right \rfloor, \left \lceil \frac{- q_i}{Q_{ii}} \right \rceil \right\} :  x\le \mu^{x_i} \r\} &  \text{if }  i \in \mathcal I^m_2 \label{vec_m}\\
\end{cases} 
\eeq

\beq
v^M_i = \begin{cases}
\left(|Q_{ii} (\mu^{x_i})^2 + q_i \mu^{x_i} |, \mu^{x_i}, i \right) & \text{if } i \in \mathcal I^M_1 \\
\left( \max_{x \in \mathcal S_i } |Q_{ii} x^2 + q_i x| , \argmax_{x \in \mathcal S_i} |Q_{ii} x^2 + q_i x|  , i\right) : \mathcal S_i =  \left\{ \round \left(\frac{- q_i}{2Q_{ii}}\right), \mu^{x_i} \right\}  & \text{if } i \in \mathcal I^M_2 \label{vec_M}
\end{cases} 
\eeq

We then sort entries of $v^m$ (on the first entry) in increasing order; assume it results in vector $\bar v^m$, so $\bar v_1^m \le \bar v_2^m \le \cdots \le \bar v_n^m$; and sort $v^M$ in decreasing order; i.e., $\bar v^M$ such that  $\bar v_1^M \ge \bar v_2^M \ge \cdots \ge \bar v_n^M$.  We then whether $\frac{\bar v_1^m}{\bar v_1^M} \ge \epsilon_l$. If this inequality holds, then we have obtained our set of $\mu^{x_i}$'s; otherwise, we have the following options for improvement:
\begin{itemize}
\item $\bar v^m_1$ happens at an $i \in \mathcal I_2^m$, where we can either change $\mu^{x_i}$ to increase the minimum coefficient, or change $\mu^{x_i}$ to decrease the maximum coefficient;

\item $\bar v^m_1$ happens at $i \in \mathcal I_1^m \cup \mathcal I_0^m$, in which case we can only change $\mu^{x_i}$ such that the maximum coefficient decreases.
\end{itemize}

In the first scenario where we have the option to both increase the minimum or decrease the maximum coefficient, we use a greedy approach to make the decision. In other words, if the minimum coefficient is improved, it will be $v^m_{2,1}$ in the next iterate; similarly, the maximum coefficient will $\bar v^M_{2,1}$, if updated. In our approach, having the interval $[\bar v^m_{1,1} , \bar v^M_{1,1}]$ for the coefficients, we wish to update it to either $[\bar v^m_{2,1} , \bar v^M_{1,1}]$ or $[\bar v^m_{1,1} , \bar v^M_{2,1}]$. These two intervals will be $[\frac{\bar v^m_{2,1}}{\bar v^M_{1,1}} , 1]$ or $[\frac{\bar v^m_{1,1}}{ \bar v^M_{2,1}}, 1]$, respectively, after the rescaling. We choose the option that gives us better lower bound, i.e., if 
$$\frac{\bar v^m_{2,1}}{\bar v^M_{1,1}} > \frac{v^m_{1,1}}{ \bar v^M_{2,1}}  \  \equiv \  \bar v^m_{2,1} \bar v^M_{2,1} > \bar v^m_{1,1} \bar v^M_{1,1}\,, $$
we attempt to improve the lower bound, thus decreasing $\mu^{x_i}$ for $i = \bar v^m_{1,3}$.
A formal presentation of what we have discussed is summarized in Algorithm \ref{Confirm $mu$'s}.

\begin{algorithm}
\caption{Adjusting $\mu^{x_i}$ for Ratio Condition on Linear Terms}
\label{Confirm $mu$'s}
\begin{tabbing} 
\quad \quad \= \quad \quad  \=\quad \quad \=\quad \quad \=\quad \quad \=\kill \\
Inputs:\\
\> \normalfont $ q_{i}, \ Q_{ii}, \ \eps_l$, and $\mu^{x_i} $ that is the output of Algorithm \ref{alg:Find_$mu$_binary_part1}
\\
{Output:}\\
\> \normalfont  $\mu^{x_i}$ for $i= 1, 2, \ldots , n$ \\ 
\\
\normalfont {\bf{initialize}} $\mathcal I^m_0, \mathcal I^m_1, \mathcal I^m_2, \mathcal I^M_1, \text{ and } \mathcal I^M_2$, using equations \eqref{indList_m} and \eqref{indList_M}.\\
\\
{\bf while 1}\\
\> \normalfont form $v^m$ and $v^M$ using equation \eqref{vec_m} and \eqref{vec_M}. \\
\> \normalfont sort $v^m$ increasingly,  and $v^M$ decreasingly (on the first entry) to get $\bar v^m$ and $\bar v^M$\\
\\
\> \normalfont {\bf if} $\frac{\bar v^m_{11}}{\bar v^M_{11}} \ge \epsilon_\ell$ {\bf return}\\
\> \normalfont {\bf else}\\
\>\> \normalfont { \bf if} $k = \bar v^m_{1,3} \in \mathcal I^m_2$  AND  $\bar v^m_{2,1} \bar v^M_{2,1} > \bar v^m_{1,1} \bar v^M_{1,1}$  \\
\\
\>\>\> \normalfont {{set $\mu^{x_k} = \mu^{x_k} -1$ ( can also be $\bar v_{12}^m-1$ )   ,}} \\
\\
\>\>\> \normalfont {\bf if} $\mu^{x_k} < \l \lfloor \frac{- q_k}{Q_{kk}} \r \rfloor$ ( OR $\bar v_{1,2}^m =\l \lfloor \frac{- q_k}{Q_{kk}} \r \rfloor$)\\
\\
\>\>\>\> $\mathcal I_2^m =\mathcal I_2^m \setminus \{ k\} $ and $\mathcal I_1^m =\mathcal I_1^m \cup \{ k\} $   \\
\>\> \normalfont {\bf else} \\
\>\>\> $k=\bar v^M_{1,3}$\\
\\
\>\>\> \normalfont {\bf if} $k\in \mathcal I^M_2$ AND $\bar v^M_{1,2} = \round\left(\frac{-q_k}{2Q_{kk}}\right)$\\
\>\>\>\> \normalfont set $\mu^{x_k} = \round\left(\frac{-q_k}{2Q_{kk}}\right) -1$, $\mathcal I_2^M =\mathcal I_2^M \setminus \{ k\} $ and $\mathcal I_1^M =\mathcal I_1^M \cup \{ k\} $   \\
\>\>\> \normalfont {\bf else} \\
\>\>\> \> set $\mu^{x_k} = \mu^{x_k} -1$
\\
\end{tabbing}
\end{algorithm}


\section*{Acknowledgements}
We are thankful to Helmut G. Katzgraber and Gili Rosenberg for insightful discussions and helpful feedback; and Marko Bucyk for editing the manuscript.

\newpage

\begin{thebibliography}{10}


\bibitem{albash2016}
T.~Albash and D.~A. Lidar.
\newblock {Adiabatic quantum computing}.
Nov. 2016
\newblock \href {http://arxiv.org/abs/1611.04471}
{\path{arXiv:quant-ph/1611.04471}}.

\bibitem{albash2014}
T.~Albash, W.~Vinci, A.~Mishra, P.~A. Warburton, and D.~A. Lidar.
\newblock Consistency tests of classical and quantum models for a quantum
  annealer.
\newblock {\em Phys. Rev. A}, 91:042314, Apr. 2015.
\newblock  \href  {http://dx.doi.org/10.1103/PhysRevA.91.042314}
  {\path{doi:10.1103/PhysRevA.91.042314}}.

\bibitem{AQC_qudits}
Mohammad Amin, Neil Dickson, and Peter Smith.
\newblock Adiabatic quantum optimization with qudits.
\newblock {\em Quantum Information Processing}, 12(4):1819 -- 1829, 2013.
\newblock  \href  {http://dx.doi.org/10.1007/s11128-012-0480-x}
  {\path{doi:10.1007/s11128-012-0480-x}}.

\bibitem{chimera}
P.~I. Bunyk, E.~M. Hoskinson, M.~W. Johnson, E.~Tolkacheva, F.~Altomare, A.~J.
  Berkley, R.~Harris, J.~P. Hilton, T.~Lanting, A.~J. Przybysz, and
  J.~Whittaker.
\newblock Architectural considerations in the design of a superconducting
  quantum annealing processor.
\newblock {\em IEEE Transactions on Applied Superconductivity}, 24(4):1--10,
  Aug. 2014.
\newblock \href {http://dx.doi.org/10.1109/TASC.2014.2318294}
  {\path{doi: 10.1109/TASC.2014.2318294}}.

\bibitem{AidanRoy2014}
J.~{Cai}, W.~G. {Macready}, and A.~{Roy}.
\newblock {A practical heuristic for finding graph minors}.
Jun. 2014.
\newblock \href {http://arxiv.org/abs/1406.2741} {\path{arXiv:1406.2741}}.

\bibitem{farhi01}
E.~Farhi, J.~Goldstone, S.~Gutmann, J.~Lapan, A.~Lundgren, and D.~Preda.
\newblock A quantum adiabatic evolution algorithm applied to random instances
  of an {NP}-complete problem.
\newblock {\em Science}, 292(5516):472--476, 2001.
\newblock 
\href{http://dx.doi.org/10.1126/science.1057726}
  {\path{doi:10.1126/science.1057726}}.

\bibitem{Ishikawa2011}
H.~Ishikawa.
\newblock Transformation of general binary {MRF} minimization to the first-order
  case.
\newblock {\em IEEE Transactions on Pattern Analysis and Machine Intelligence},
  33(6):1234--1249, Jun. 2011.
\newblock \href {http://dx.doi.org/10.1109/TPAMI.2010.91}
  {\path{doi:10.1109/TPAMI.2010.91}}.

\bibitem{DwaveNature}
M.~W. Johnson, M.~H.~S. Amin, S.~Gildert, T.~Lanting, F.~Hamze, N.~Dickson,
  R.~Harris, A.~J. Berkley, J.~Johansson, P.~Bunyk, E.~M. Chapple, C.~Enderud,
  J.~P. Hilton, K.~Karimi, E.~Ladizinsky, N.~Ladizinsky, T.~Oh, I.~Perminov,
  C.~Rich, M.~C. Thom, E.~Tolkacheva, C.~J.~S. Truncik, S.~Uchaikin, J.~Wang,
  B.~Wilson, and G.~Rose.
\newblock Quantum annealing with manufactured spins.
\newblock {\em Nature}, 473(7346):194--198, 05 2011.
\newblock \href{http://dx.doi.org/10.1038/nature10012}
{\path{doi:10.1038/nature10012}}.

\bibitem{karimi_GD}
S.~Karimi and P.~Ronagh.
\newblock A subgradient approach for constrained binary programming via quantum
  adiabatic evolution.
Jan. 2017.
\newblock \href {https://arxiv.org/abs/1605.09462}
{\path{arXiv:1605.09462}}.

\bibitem{katzgraber2015}
H.~G. Katzgraber, F.~Hamze, Z.~Zhu, A.~J. Ochoa, and H.~Munoz-Bauza.
\newblock Seeking quantum speedup through spin glasses: The good, the bad, and
  the ugly.
\newblock {\em Phys. Rev. X}, 5:031026, Sep. 2015.
\newblock  \href  {http://dx.doi.org/10.1103/PhysRevX.5.031026}
  {\path{doi:10.1103/PhysRevX.5.031026}}.
  
\bibitem{katzgraber_qudits}
L.~W. Lee, H.~G. Katzgraber, and A.~P. Young.
\newblock Critical behavior of the three- and ten-state short-range potts
  glass: A monte carlo study.
\newblock {\em Phys. Rev. B}, 74:104416, Sep 2006.
\newblock  \href  {http://dx.doi.org/10.1103/PhysRevB.74.104416}
  {\path{doi:10.1103/PhysRevB.74.104416}}.

\bibitem{biasedGSsampling}
Salvatore Mandr\`a, Zheng Zhu, and Helmut~G. Katzgraber.
\newblock Exponentially biased ground-state sampling of quantum annealing
  machines with transverse-field driving Hamiltonians.
\newblock {\em Phys. Rev. Lett.}, 118:070502, Feb 2017.
\newblock  \href  {https://doi.org/10.1103/PhysRevLett.118.070502}
  {\path{doi:10.1103/PhysRevLett.118.070502}}.

\bibitem{portfolioOpt}
R.~Mansini, W.~Ogryczak, and M.~G. Speranza.
\newblock {\em Linear Models for Portfolio Optimization}.
\newblock Springer International Publishing: 19--45, 2015.
\newblock  \href  {https://doi.org/10.1007/978-3-319-18482-1_2}
  {\path{doi:10.1007/978-3-319-18482-1_2}}.

\bibitem{GSstat}
Yoshiki Matsuda, Hidetoshi Nishimori, and Helmut~G Katzgraber.
\newblock Ground-state statistics from annealing algorithms: quantum versus
  classical approaches.
\newblock {\em New Journal of Physics}, 11(7):073021, 2009.
\newblock  \href  {https://doi.org/10.1088/1367-2630/11/7/073021}
  {\path{doi:10.1088/1367-2630/11/7/073021}}.

\bibitem{McGeoch2013}
C.~C. McGeoch and C.~Wang.
\newblock Experimental evaluation of an adiabiatic quantum system for
  combinatorial optimization.
\newblock In {\em Proceedings of the ACM International Conference on Computing
  Frontiers}, CF '13, pages 23:1--23:11, New York, NY, USA, 2013. ACM.
\newblock 
  \href {http://dx.doi.org/10.1145/2482767.2482797}
  {\path{doi:10.1145/2482767.2482797}}.
  
\bibitem{Venturelli2015}
E.~G. {Rieffel}, D.~{Venturelli}, B.~{O'Gorman}, M.~B. {Do}, E.~M. {Prystay},
  and V.~N. {Smelyanskiy}.
\newblock {A case study in programming a quantum annealer for hard operational
  planning problems}.
\newblock {\em Quantum Information Processing}, 14:1--36, Jan. 2015.
\newblock   \href {http://dx.doi.org/10.1007/s11128-014-0892-x}
  {\path{doi:10.1007/s11128-014-0892-x}}.

\bibitem{Gili}
G.~Rosenberg, P.~Haghnegahdar, P.~Goddard, P.~Carr, K.~Wu, and M.~L. de~Prado.
\newblock Solving the optimal trading trajectory problem using a quantum
  annealer.
\newblock {\em IEEE Journal of Selected Topics in Signal Processing},
  10(6):1053--1060, Sep. 2016.
\newblock \href {http://dx.doi.org/10.1109/JSTSP.2016.2574703}
  {\path{doi:10.1109/JSTSP.2016.2574703}}.

\bibitem{optimization_AQO}
Giuseppe~E Santoro and Erio Tosatti.
\newblock Optimization using quantum mechanics: quantum annealing through
  adiabatic evolution.
\newblock {\em Journal of Physics A: Mathematical and General}, 39(36):R393,
  2006.
\newblock \href {http://dx.doi.org/10.1088/0305-4470/39/36/R01}
{\path{doi:10.1088/0305-4470/39/36/R01}}.

\bibitem{supplyChain}
Tadeusz Sawik.
\newblock {\em Scheduling in Supply Chains Using Mixed Integer Programming}.
\newblock John Wiley \& Sons Inc., 2011.
\newblock \href {http://dx.doi.org/10.1002/9781118029114}
{\path{doi:10.1002/9781118029114}}.

\bibitem{Dominic2015}
D.~{Venturelli}, D.~J.~J. {Marchand}, and G.~{Rojo}.
\newblock {Quantum annealing implementation of job-shop scheduling}.
Jun. 2015.
\newblock \href {http://arxiv.org/abs/1506.08479} {\path{arXiv:1506.08479}}.

\bibitem{zoshk2016}
Z.~Zhu, A.~J. Ochoa, S.~Schnabel, F.~Hamze, and H.~G. Katzgraber.
\newblock Best-case performance of quantum annealers on native spin-glass
  benchmarks: How chaos can affect success probabilities.
\newblock {\em Phys. Rev. A}, 93:012317, Jan. 2016.
\newblock \href  {http://dx.doi.org/10.1103/PhysRevA.93.012317}
  {\path{doi:10.1103/PhysRevA.93.012317}}.

\bibitem{Zick}
K.~M. Zick, O.~Shehab, and M.~French.
\newblock Experimental quantum annealing: Case study involving the graph
  isomorphism problem.
\newblock {\em Scientific Reports}, 5:11168, Jun. 2015.
\newblock \href {http://dx.doi.org/10.1038/srep11168}
  {\path{doi:10.1038/srep11168}}.

\end{thebibliography}

\end{document}